\numberwithin{equation}{section}
\theoremstyle{plain}
\newtheorem{theorem}{Theorem}[section]
\begin{document}

\begin{frontmatter}
  \title{Improving Bayesian estimation \\ of Vaccine Efficacy}

\runtitle{Vaccine Efficacy}

\begin{aug}
\author{\fnms{Mauro} \snm{Gasparini}
\ead[label=e1]{mauro.gasparini@polito.it}}
%
%

\runauthor{Gasparini}

\affiliation{Department of Mathematical Sciences ``G.L. Lagrange'' \\
Politecnico di Torino }
\address{Department of Mathematical Sciences ``G.L. Lagrange'' \\
  Politecnico di Torino \\
\printead{e1}}


\end{aug}

  \begin{abstract}
    A full Bayesian approach to the estimation of Vaccine Efficacy is
    presented, which is an improvement over the currently used exact
    method conditional on the total number of cases.  As an example,
    we reconsider the statistical sections of the BioNTech/Pfizer
    protocol, which in 2020 has led to the first approved
    anti-Covid-19 vaccine.
  \end{abstract}

\begin{keyword}[class=MSC]
\kwd[Primary ]{62FXX}
\kwd{62FXX}
\kwd[; secondary ]{62F15}
\end{keyword}

\begin{keyword}
\kwd{Covid-19}
\kwd{Pfizer}
\kwd{BioNTtech}
\end{keyword}

\end{frontmatter}

\section{Introduction}

The so-called ``exact method conditional on the total number of
cases'' is a Bayesian approach to the estimation of vaccine efficay
(VE) which has been used in the recent pivotal clinical trial of the anti
Covid-19 Vaccine sponsored by Pfizer/BioNTech (\cite{Pfizer}) and has
also been mentioned in the analogous paper about the vaccine sponsored
by Moderna (\cite{Moderna}), the other currently approved mRNA based
vaccine. The name ``exact method conditional on the total number of
cases'' comes from that citation in the latter work.

In addition to the enormous impact of these new therapies on the
lives of billions of people, it should be stressed that these were
some of the first major clinical trials adopting Bayesian methods
for planning and analysis, something many statisticians have been
advocating for quite some time now.

Nonetheless, the exact method conditional on the total number of
cases is only an approximate Bayesian approach, since it makes only
partial use of the full likelihood and of the Bayesian
updating mechanism. In particular, the total number of cases
and the surveillance times
of the vaccinated and of the placebo cohorts are treated
as known parameters instead of observed statistics,
hence the adjective ``conditional'': the method is a partially
Bayesian method conditional on the the total number of cases
and on the surveillance times.

This work contains a more complete full Bayesian approach which takes
as starting point the same assumptions of the conditional method but 
generalizes it by computing the distributions of the total number of
cases and of the surveillance times and by including them in the full
model.

The exact method conditional on the total number of
cases is described in Section 2, while the full Bayesian
model is derived in Section 3.
Section 4 contains a revisitation of the Pfizer/BioNTech
results and a comparison with the full Bayesian approach.

\section{The exact method conditional on the total number of
cases} 

The ``exact method conditional on the total number of cases'' relies
on the mathematical assumption that the infection processes can be
modeled by two overlapping homogeneous Poisson processes: one for
vaccinated participants - with intensity $\lambda_v$ - and an
independent one for the control (not vaccinated) participants - with
intensity $\lambda_c$. The time dimension of the Poisson processes is
called ``surveillance time'' and it is measured in person-years of
follow-up.  It is the sum of all durations participants have been
experiencing in the clinical trial from 7 days (for the BNT162b2 mRNA
vaccine) after the second dose up until the earliest of the following
four endpoints happens: onset of disease, death, loss to follow up or
end of study.  A common measure of comparison between two infection
processes in Epidemiology is the incidence rate ratio IRR=
$\lambda_v/\lambda_c$; based on it, the percentage version of VE is
defined as
\begin{equation}
  100 \times \mathrm{VE} = 100 \times (1- \mathrm{IRR}) =
  100 \times ( 1 - \frac{\lambda_v}{\lambda_c}),
\end{equation}
which can be interpreted as the average
percentage of missed infections
(percentage of not infected vaccinated participants who
would have been infected if not vaccinated).
In order to estimate VE, one can define a likelihood based on
the following statistics:
\begin{itemize}
\item $s_v=$ surveillance time of the vaccine cohort, 
\item $s_c=$ surveillance time of the control cohort, 
\item $x_v+x_c=$ total number of infections,
\item $x_v=$ number of infections in the vaccine cohort.
\end{itemize}
Using standard probability symbolism, the joint density of the corresponding
random variables (indicated in capital letters), which is the dual
way of writing the likelihood, can be expressed as
\begin{multline}
\label{likelihood}
f_{S_V,S_C}(s_V,s_C|\lambda_V,\lambda_C) \times 
f_{X_V+X_C|S_V,S_C}(x_v+x_c|s_V,s_C\lambda_V,\lambda_C) \times \\
f_{X_V|X_V+X_C,S_V,S_C}(x_V|x_v+x_c,s_V,s_C\lambda_V,\lambda_C) 
\end{multline}
i.e. as the chain product of the marginal density of $S_V,S_C$
times the conditional density
of $X_V+X_C$
times the conditional density of  $X_V$ given $x_V+x_C$,
which can be easily proved to be binomial:
\begin{multline}
\label{binomiallikelihood}
f_{X_V|X_V+X_C,S_V,S_C}(x_v|x_v+x_c,s_v,s_c\lambda_v,\lambda_c) = \\
\binom{x_v+x_c}{x_v}
\left(\frac{s_v\lambda_v}{s_v\lambda_v+s_c\lambda_c}\right)^{x_v}
\left(1-\frac{s_v\lambda_v}{s_v\lambda_v+s_c\lambda_c}\right)^{x_c}
\end{multline}
Notice that the probability of infection in this formula is
\begin{equation}
\label{relation}
\theta =  \frac{s_v\lambda_v}{s_v\lambda_v+s_c\lambda_c} =
\frac{s_v(1-\mathrm{VE})}{s_v(1-\mathrm{VE})+s_c}.
\end{equation}
The exact method conditional on the total number of
cases consists of assuming that the first two factors of
the likelihood (\ref{likelihood}) do not depend
on VE, substituing {\em de facto} the binomial expression 
(\ref{binomiallikelihood}) for the full  likelihood (\ref{likelihood}).
To complete the analysis, a conditional Bayesian approach is then followed 
and a conjugate prior Beta(a,b) is given to the parameter $\theta$.
Once the posterior is obtained, the mean a posteriori (MAP) estimate,
Bayesian credible intervals and posterior probabilities
can be computed regarding $\theta$ and, working
equation (\ref{relation})  backward, regarding VE itself. An example
from the Pfizer/BioNTech paper is discussed in Section 4.

\section{An alternative full Bayesian model}

It is not true that the first two factors of
the likelihood (\ref{likelihood}) are independent of VE.
Imagine studies that go on for a long time:
the ratio between the control surveillance time $s_c$
and the vaccinated surveillance time $s_v$
approximates then the ratio of the two mean times
to infection, by the law of large numbers applied once to the numerator
and once to the denominator of the ratio.
Now, the ratio of the two mean times is exactly 1-VE, since the times
to infection are exponentially distributed with mean $1/\lambda_v$
for the vaccinated and $1/\lambda_c$ for the control groups.
Hence, the ratio of the surveillance times does contain
some extra information about VE,
in addition to the numbers of cases in the two groups.
In practice, things are complicated by the fact that the study
must have a finite duration D.

Now, it is not impossible to derive an explicit expression for the
full likelihood (\ref{likelihood}).  Fist, one should notice that
$f_{X_V+X_C|S_V,S_C}(x_v+x_c|s_V,s_C\lambda_V,\lambda_C)$ is
Poisson($s_V\lambda_V+s_C\lambda_C$) by the properties of two
independent overlapping Poisson processes.  Next, the total
surveillance times are the sum of many i.i.d. random variables, each
of them given by the minimum between the time to infection and a
random censoring time.  By the central limit theorem, the surveillance
time of each of the two cohorts is therefore approximated normal as in
the following theorem, where a reasonable specific assumption is made
about the patient recruiting process.

\begin{theorem}
  Under the following assumptions:
  \begin{enumerate}
  \item  an infection process is homogeneous Poisson with
    intensity $\lambda$;
  \item no participant is lost to follow up;
   \item  the recruitment process is uniform
  between beginning of study and a study duration $D$;
  \end{enumerate}
 the mean surveillance time $S/n$, where $n$
  is the number of recruited participants,
  is asymptotically normal with mean
  \begin{equation}
  \label{expectation}
    \mathrm{E}(\min(T,C)) = \frac 1{\lambda}
  (1+ \frac{\exp(-\lambda D)-1}{\lambda D})
      \end{equation}
and variance
\begin{equation}
  \label{variance}
\mathrm{Var}(\min(T,C)) = \frac 1{\lambda^2}
\left( 2 \exp(-\lambda D) + \frac{4\exp(-\lambda D)}{\lambda D}
-(1+ \frac{\exp(-\lambda D)-1}{\lambda D})^2 \right)
  \end{equation}
\end{theorem}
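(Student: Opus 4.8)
\emph{Proof strategy.} The plan is to recognise $S$ as a sum of i.i.d.\ bounded
random variables and then invoke the classical central limit theorem, the only
real work being the identification of the per-participant distribution and the
evaluation of its first two moments. First I would fix notation dictated by the
three assumptions. Under assumption~(3), a participant enrolled at calendar time
$U$, with $U$ uniform on $[0,D]$, can be observed for at most $C=D-U$ time units
before the study closes, so the administrative censoring time $C$ is itself
uniform on $[0,D]$; under assumptions~(1)--(2) the time to infection $T$ of that
participant is exponential with rate $\lambda$ and independent of $C$, by the
memorylessness of the homogeneous Poisson process. Hence the surveillance time
contributed by a single participant is $Y=\min(T,C)$, the $n$ participants
contribute i.i.d.\ copies $Y_1,\dots,Y_n$, and $S=\sum_{i=1}^n Y_i$. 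Since
$0\le Y\le D$, the variable $Y$ has finite moments of every order.

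Next I would compute $\mathrm{E}(Y)$ and $\mathrm{E}(Y^2)$ through the
tail-integral identities. For $0\le t\le D$ one has
$\prob{Y>t}=\prob{T>t}\,\prob{C>t}=e^{-\lambda t}(1-t/D)$, and $\prob{Y>t}=0$ for
$t>D$, so
\begin{equation*}
  \mathrm{E}(Y)=\int_0^D e^{-\lambda t}\Bigl(1-\tfrac tD\Bigr)\,dt,
  \qquad
  \mathrm{E}(Y^2)=\int_0^D 2t\,e^{-\lambda t}\Bigl(1-\tfrac tD\Bigr)\,dt .
\end{equation*}
Each integral reduces, after one or two integrations by parts, to an elementary
combination of $e^{-\lambda D}$, $\lambda$ and $D$; collecting terms gives
formula~(\ref{expectation}) for $\mathrm{E}(Y)$, and then
$\mathrm{Var}(Y)=\mathrm{E}(Y^2)-\mathrm{E}(Y)^2$ gives formula~(\ref{variance}).

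Finally, writing $\mu:=\mathrm{E}(Y)$ and $\sigma^2:=\mathrm{Var}(Y)\in(0,\infty)$,
the Lindeberg--L\'evy central limit theorem for the i.i.d.\ sequence $(Y_i)$
yields $\sqrt n\,(S/n-\mu)\to N(0,\sigma^2)$ in distribution as $n\to\infty$;
that is, the mean surveillance time $S/n$ is asymptotically normal with mean
$\mu$ given by~(\ref{expectation}) and variance $\sigma^2$ given
by~(\ref{variance}) (the variance of $S/n$ itself being $\sigma^2/n$), as
claimed.

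The step I expect to require the most care is not the probabilistic core but the
translation of assumption~(3) into the precise statement
``$C\sim\mathrm{Uniform}(0,D)$, independent of $T$'': the exact shape of the two
integrals, and hence of the closed forms~(\ref{expectation})
and~(\ref{variance}), depends on getting the censoring distribution exactly
right. Once that is settled, the CLT step is immediate and the remainder is
routine calculus.
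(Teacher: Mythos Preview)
Your proposal is correct and follows essentially the same route as the paper: identify $C=D-U$ as uniform on $[0,D]$ independent of the exponential $T$, compute $\mathrm{E}(\min(T,C))$ and $\mathrm{E}(\min(T,C)^2)$ via the tail-integral identities $\mathrm{E}(Y)=\int_0^\infty\prob{Y>t}\,dt$ and $\mathrm{E}(Y^2)=\int_0^\infty 2t\,\prob{Y>t}\,dt$, and conclude by the CLT. You are in fact slightly more careful than the paper in spelling out why the CLT applies (i.i.d.\ structure, boundedness guaranteeing finite variance), but the argument is the same.
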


\begin{proof}
  The potential infection time is exponential with rate
  $\lambda$ and can be censored
  by an independent censoring random variable $C$. 
  By assumption 3, $C$ can be written as $D-U$, where
  $U$ is uniform between 0 and the recruitment duration $D$.
Therefore it has itself a uniform distribution between 0 and D.
Next,
  \begin{align*}
  \mathrm{E}(\min(T,C)) &= \int_0^{\infty} \mathrm{P}(\min(T,C)>t) dt \\
  &= \int_0^{\infty} \mathrm{P}(T>t) \mathrm{P}(C>t) dt \\
  &= \int_0^{D} \exp(-\lambda t) \ \frac{D-t}{D} \ dt \\
  &=\frac 1{\lambda} (1+ \frac{\exp(-\lambda D)-1}{\lambda D})
  \end{align*}
    \begin{align*}
  \mathrm{E}(\min(T,C)^2) &= \int_0^{\infty} \mathrm{P}(\min(T,C)^2>t) dt \\
  &= \int_0^{\infty} \mathrm{P}(T>\sqrt{t}) \mathrm{P}(C>\sqrt{t}) dt \\
  &= \int_0^{\infty} \mathrm{P}(T>x) \mathrm{P}(C>x) 2x dx \\
  &=  \frac 1{\lambda^2}
\left(2 \exp(-\lambda D) + \frac{4\exp(-\lambda D)}{\lambda D}\right),
    \end{align*}
    and the variance can be obtained as stated by computing
    $$
    \mathrm{Var}(\min(T,C)) = \mathrm{E}(\min(T,C)^2) -(\mathrm{E}(\min(T,C)))^2
    $$
Finally, the central limit theorem applies.
\end{proof}
Having completed the construction of the likelihood,
to obtain a full Bayesian model only the prior on $(\lambda_v,\lambda_c)$
remains to be decided. A natural proposal is to have
independent gamma priors with hyperparameters $(a_v,b_v)$ and
$(a_c,b_c)$. The scale parameters $b_v$ and $b_c$ should then be chosen
to give $\lambda_v$ and $\lambda_c$ the right order of magnitude.
For example, having a prior guess $\widehat{\lambda_c}$ for the
average number of infected people in the unit time - something
we may estimate based on the natural history of the disease -
we could set $ b_c = a_c/\widehat{\lambda_c}$
and, for the lack of better information, impose $b_v=b_c$.
 Next, the two hyperparameters $a_V$ and $a_C$ can be chosen by gauging
 $VE= 1- \lambda_v/\lambda_c$. For example, noticing that
 \begin{equation}
\label{priorVE}
   \mathrm{E}(\text{VE}) = 1-\mathrm{E}(\lambda_v)\mathrm{E}(1/\lambda_c)
= 1-\frac{a_v}{b_v}\frac{b_c}{a_c-1},
 \end{equation}
then, if $b_c=b_v$, then we could finally set
$$a_c=\frac{1+a_c-\widehat{\text{VE}}}{1-\widehat{\text{VE}}}$$
where $\widehat{\text{VE}}$ is a suitable prior guess for VE,
with $0 \leq VE \leq 1$ (there would be no point in experimenting
with a vaccine for which the expected VE is negative, since
that case would imply $\lambda_v>\lambda_c$).

Finally, the choice of $a_v$ could be driven by noticing
that $a_v=1$ allows for an inverted J-shape density on
$\lambda_v$, which is therefore exponential.
WIth this choice, $b_c=b_v=(\widehat{\lambda_c})^{-1}$, i.e. the average
time to infection of a randomly selected participant in the control group.
Notice also that in this case $a_c>1$, which would allow for the
prior expectation of VE in formula (\ref{priorVE}) to exist,
positive.

To recap, here's the full Bayesian model proposed, which is also illustrated
graphically by its associated Directed acyclic Graph (DAG) in Figure~\ref{figura}
\begin{align*}
  \lambda_v &\sim \text{Gamma}(a_v,b_v) \\
  \lambda_c &\sim \text{Gamma}(a_c,b_c) \\
  s_v|\lambda_v &\sim \text{Normal}(n_v \mathrm{E}(\min(T_v,C_v)), n_v \mathrm{Var}(\min(T_v,C_v)) \\
  s_c|\lambda_c &\sim \text{Normal}(n_c \mathrm{E}(\min(T_c,C_c)), n_c \mathrm{Var}(\min(T_c,C_c)) \\
  X_V+X_C|S_V,S_C,\lambda_V,\lambda_C &\sim \text{Poisson}(s_V\lambda_V+s_C\lambda_C) \\
X_V|X_V+X_C,S_V,S_C,\lambda_v,\lambda_c &\sim \text{Binomial} \left(x_v+x_c,
  \frac{s_v\lambda_v}{s_v\lambda_v+s_c\lambda_c} \right).
\end{align*}
with $\mathrm{E}(\min(T_v,C_v)) \mathrm{Var}(\min(T_v,C_v)),
\mathrm{E}(\min(T_c,C_c)) \mathrm{Var}(\min(T_c,C_c))$ given in Theorem 3.1
for the vaccine and the control group respectively and the following
default choices:
\begin{align*}
  a_v &=1 \\
  a_c &=\frac{2-\widehat{VE}}{1-\widehat{VE}} \\
  b_v=b_c &= (\widehat{\lambda_c})^{-1}.
\end{align*}
The following theorem draws a connection to
the exact method conditional on the total number of cases.
\begin{theorem}
  If $\lambda_v \sim \mathrm{Gamma}(a_v,b_v)$ and, independently,
  \linebreak $\lambda_c \sim \mathrm{Gamma}(a_c,b_c)$, then
  $$ \frac{b_v\lambda_v}{b_v\lambda_v+b_c\lambda_c}
  \sim \mathrm{Beta}(a_v,a_c).$$
\end{theorem}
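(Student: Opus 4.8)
The plan is to reduce the statement to the classical representation of the Beta law as a normalized pair of independent Gamma variates. First I would record that, in the rate parametrization used throughout the paper (so that $\mathrm{Gamma}(a,b)$ has density proportional to $\lambda^{a-1}e^{-b\lambda}$ and mean $a/b$, consistent with formula~(\ref{priorVE})), rescaling a Gamma variate by its rate yields a unit-rate Gamma: a one-line change of variables gives $b_v\lambda_v\sim\mathrm{Gamma}(a_v,1)$ and $b_c\lambda_c\sim\mathrm{Gamma}(a_c,1)$, and these remain independent because $\lambda_v$ and $\lambda_c$ are. Writing $Y_1=b_v\lambda_v$ and $Y_2=b_c\lambda_c$, the quantity in the statement is exactly $Y_1/(Y_1+Y_2)$, so it suffices to prove that $Y_1/(Y_1+Y_2)\sim\mathrm{Beta}(a_v,a_c)$ when $Y_1,Y_2$ are independent unit-rate Gammas with shapes $a_v,a_c$.

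To do that I would carry out the change of variables $(Y_1,Y_2)\mapsto(W,Z)$ with $W=Y_1+Y_2$ and $Z=Y_1/(Y_1+Y_2)$, inverted by $Y_1=WZ$, $Y_2=W(1-Z)$, whose Jacobian has absolute value $w$. Substituting into the product density $\frac{1}{\Gamma(a_v)\Gamma(a_c)}\,y_1^{a_v-1}y_2^{a_c-1}e^{-(y_1+y_2)}$ and collecting the powers of $w$ and of $z,1-z$, the joint density on $(0,\infty)\times(0,1)$ factors as $\bigl[\frac{1}{\Gamma(a_v+a_c)}\,w^{a_v+a_c-1}e^{-w}\bigr]\cdot\bigl[\frac{\Gamma(a_v+a_c)}{\Gamma(a_v)\Gamma(a_c)}\,z^{a_v-1}(1-z)^{a_c-1}\bigr]$. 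The second bracket is precisely the $\mathrm{Beta}(a_v,a_c)$ density, and since the first bracket is a genuine $\mathrm{Gamma}(a_v+a_c,1)$ density it integrates to $1$; marginalizing over $w$ therefore gives the claimed law for $Z$ (and, incidentally, shows $Z\perp W$, which is not needed here).

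I do not expect a genuine obstacle: the statement is essentially folklore and could even be dispatched by citation. The only points requiring attention are (i) fixing the rate-versus-scale convention, which is what makes $Y_i=b_i\lambda_i$ the right rescaling rather than $Y_i=\lambda_i/b_i$, and (ii) the bookkeeping in the Jacobian step, namely checking that the exponents of $w$ add up to $a_v+a_c-1$ and that the three Gamma constants recombine into the Beta normalizing constant $\Gamma(a_v+a_c)/(\Gamma(a_v)\Gamma(a_c))$.
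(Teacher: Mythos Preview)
Your proposal is correct and follows the same reduction as the paper: first rescale to obtain independent unit-rate Gammas $b_v\lambda_v\sim\mathrm{Gamma}(a_v,1)$ and $b_c\lambda_c\sim\mathrm{Gamma}(a_c,1)$, then identify the normalized ratio as Beta. The only difference is cosmetic---where you carry out the $(W,Z)$ change of variables explicitly, the paper simply invokes R\'enyi's representation of the Dirichlet distribution (which in the bivariate case is exactly the Gamma-to-Beta fact you prove), so your argument is a self-contained unpacking of the same citation.
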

\begin{proof}
  It is easy to see that $b_v\lambda_v \sim  \text{Gamma}(a_v,1)$
  and  , independently, $b_c\lambda_c \sim  \text{Gamma}(a_c,1)$.
  Then, by the  well known Renyi's representation of a Dirichlet
  distribution,   which in one dimension is the same as a Beta distribution,
  the theorem follows.
\end{proof}
If we used the data-dependent prior $b_v=s_v$ and $b_c=s_c$,
then the full model would approximate  the
exact method conditional on the total number of
cases (see equation~\ref{relation}).
That is not what is recommended here though, since data-dependent
priors are difficult to accept and instead the prior choice
$b_v=b_c = (\widehat{\lambda_c})^{-1}$ discussed above looks more reasonable.


\section{The Pfizer/BioNTech protocol revisited}

Several computational strategies are available for the full Bayesian
model, the easiest being MCMC simulation of the exact posterior
distribution of VE.
Among the many possibilities available nowadays, the sofware
OpenBUGS \linebreak
(\texttt{http://www.openbugs.net/}) has been used here.
The model and data files necessary to run the MCMC simulation
in OpenBUGS are listed in the Appendix.
The following results are taken from Table 2 of \cite{Pfizer}:
\begin{align*}
  n_c &= 17411 &\text{sample size of the vaccinated cohort}\\
  n_v &= 17511 &\text{sample size of the control cohort}\\
  s_v &= 2214 &\text{surveillance time of the vaccinated cohort}\\
  s_c &= 2222 &\text{surveillance time of the control cohort}\\
  x_v &= 8 &\text{number of cases in the vaccinated cohort}\\
  x_c & =162 &\text{number of cases in the control cohort}\\
  D&=0.29 &\text{enrollment duration, in years}
\end{align*}
and, according to the discussion at the end of Section 3,
they have been programmed together with the following
choice of the parameters to mimick the method used
in \cite{Pfizer}:
\begin{align*}
      a_v=0.7, \ \ b_v=2214 \ \ a_c=1, \ \ b_c=2222.
\end{align*}
The results of the Bayesian analysis are summarized in the
posterior mean of percentage VE, equal to 93.7,
and in the posterior equal tail 95\% credible interval (89.0,97.0).
The results reported in \cite{Pfizer} are instead a posterior mean
equal to 95.0 and a posterior interval (90.3,97.6) for VE. 
We notice a basic agreement of the results, but a larger uncertainty
regarding VE in the full Bayesian
model due to the more accurate complete accounting of sampling variation.

By proceeding instead following the recommendations at the end
of Section 3 and assuming an average infection time equal to a week
($(\widehat{\lambda_c})^{-1}=0.01917808$ years), one would have used instead 
\begin{align*}
      a_v=1, \ \ b_v=0.01917808 \ \ a_c=2.428571, \ \ b_c=0.01917808.
\end{align*}
and obtained a posterior mean
equal to 93.6 and a posterior interval (89.0,97.0) for VE.
That does not differ from the full Bayes analysis due to the very large
sample sizes, which make the influence of the prior disappear and
provide substantial evidence for the validity of the Pfizer/BioNTech vaccine.

\section{Conclusions}

From a theoretical point of view, a new fully Bayesian coherent model is
developed here for VE. It does not provide results in strong contrast
with the approximations in the original paper \cite{Pfizer}, which exhibits
large sample sizes and uncontroversial positive results. The new model may be the
theoretical basis for the other vaccines currently under development,
which may not exhibit the same size of VE. For those, a careful and motivated
prior may make a difference.

\begin{figure}
\setlength{\unitlength}{.25cm}
\begin{center}
\begin{picture}(40,35)(0,15)

\put(20,29){\oval(5,2)}
\put(20,29){\makebox(0,0){$\lambda_v$}}
\put(20,28){\vector(0,-1){2}}
\put(19,28){\vector(-2,-1){10}}
\put(22.5,29){\vector(1,-2){3}}
\put(22.5,28.8){\line(1,-2){2.8}}

\put(20,25){\oval(5,2)}
\put(20,25){\makebox(0,0){$S_v$}}
\put(19,24){\vector(0,-1){1}}
\put(17.5,25){\vector(-3,-1){6}}

\put(17,22){\oval(7,2)}
\put(17,22){\makebox(0,0){$X_v+X_c$}}
\put(13.5,22){\vector(-1,0){1}}

\put(20,19){\oval(5,2)}
\put(20,19){\makebox(0,0){$S_c$}}
\put(19,20){\vector(0,1){1}}
\put(17.5,19){\vector(-3,1){6}}

\put(20,15){\oval(5,2)}
\put(20,15){\makebox(0,0){$\lambda_c$}}
\put(20,16){\vector(0,1){2}}
\put(19,16){\vector(-2,1){10}}
\put(22.5,15){\vector(1,2){3}}
\put(22.5,15.2){\line(1,2){2.8}}

\put(10,22){\oval(5,2)}
\put(10,22){\makebox(0,0){$X_v$}}

\put(26,22){\oval(4,2)}
\put(26,22){\makebox(0,0){VE}}

\end{picture}   
\end{center}
\caption{DAG for the full Bayesian model
  for Vaccine Efficacy. Double arrows indicate a functional relationship.}
  \label{figura}
\end{figure}
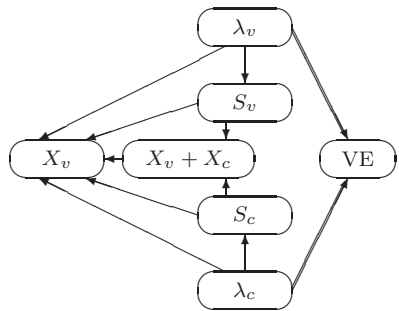

\section{Appendix}

The model of the OpenBUGS program is
\begin{verbatim}
model{

lambdac ~ dgamma(ac,bc)
lambdav ~ dgamma(av,bv)
sc ~ dnorm(expsc, tausc)
sv ~ dnorm(expsv, tausv)

tausc <- 1/(nv*varsc)
tausv <- 1/(nv*varsv)

expsc <- nc*(1- (1-exp(-lambdac*D))/(lambdac*D)) / lambdac
varsc <- (2*exp(-lambdac*D) + 4*exp(-lambdac*D)/(lambdac*D) - 
          pow((1-(1-exp(-lambdac*D))/(lambdac*D)),2))/pow(lambdac,2)
expsv <- nv*(1- (1-exp(-lambdav*D))/(lambdav*D))/lambdav
varsv <- (2*exp(-lambdav*D) + 4*exp(-lambdav*D)/(lambdav*D) -
         pow((1-(1-exp(-lambdav*D))/(lambdav*D)),2))/pow(lambdav,2)

cases ~ dpois(meancases)
meancases <- sc*lambdac + sv*lambdav

xv ~ dbin(theta, cases)
theta <- sv*lambdav/(sc*lambdac + sv*lambdav)

VE <- 1 - lambdav/lambdac
}
\end{verbatim}
whereas the data of the OpenBUGS program is
\begin{verbatim}
  list(av=1, bc=0.01917808,         ### full Bayes prior for lambdac 
       ac= 2.428571, bv=0.01917808,   ### full Bayes prior for lambdav
#      ac=1, bc=2222,     ### Pfizer prior for lambdac 
#      av=0.7, bv=2214,   ### Pfizer prior for lambdav
      nc=17511,       ### ss for c
      nv=17411,       ### ss for v
      sc=2222,        ### surveillance time for c
      sv=2214,        ### surveillance time for v
      cases=170,      ### totalnumber of cases
      xv=8,           ### infected among vaccinated
      D=0.29          ### duration of uniform recruitment
)

\end{verbatim}

\begin{thebibliography}{9}

\bibitem{Moderna} 
  \textsc{Baden, L.R. {\em et al.}} (2021).
  Efficacy and Safety of the mRNA-1273 SARS-CoV-2 Vaccine.
  {\em The New England Journal of Medicine}, Vol. 384, 403--416.

\bibitem{Pfizer} 
  \textsc{Polack, F.P. {\em et al.}} (2020).
  Safety and Efficacy of the BNT162b2 mRNA Covid-19 Vaccine.
  {\em The New England Journal of Medicine}, Vol. 383, 2603--2615.
  
\end{thebibliography}
\end{document}